\def\principaladviser#1{\gdef\@principaladviser{#1}}
\newtheorem{theorem}{Theorem}[section]
\newtheorem{lemma}[theorem]{Lemma}
\newtheorem{definition}[theorem]{Definition}
\newtheorem{proposition}[theorem]{Proposition}
\def\be{\begin{equation}}
\def\ee{\end{equation}}
\def\bea{\begin{eqnarray}}
\def\eea{\end{eqnarray}}
\begin{document}

\title{Perfect Hermitian rank-metric codes}

\author{Usman Mushrraf}

%\date{June 2024}

\maketitle
\begin{abstract}
This study investigates Hermitian rank-metric codes, a special class of rank-metric codes, focusing on perfect codes and on the analysis of their packing properties. Firstly, we establish bounds on the size of spheres in the space of Hermitian matrices and, as a consequence, we show that non-trivial perfect codes do not exist in the Hermitian case. We conclude the paper by examining their {packing density}. 
\end{abstract}

\noindent\textbf{MSC2020:}{ 94B05; 94B65; 94B27 }\\
\textbf{Keywords:}{ Hermitian  matrix; {packing density}; perfect code }

\section{Introduction}\label{Sec: 1}
Rank-metric codes have recently received attention for their applications in random linear network coding and not only \cite{koetter2008coding,silva2008rank,gabidulin2010correcting,bartz2022rank}. These codes can be described as subsets of $\mathbb{F}^{m \times n}_{q}$ equipped with the rank metric. The concept of rank-metric codes was introduced by  Delsarte \cite{delsarte1978bilinear} and independently by Gabidulin \cite{gabidulin1985theory} and Roth \cite{roth1991maximum}. Gabidulin used Frobenius automorphism to construct an optimal family of linear codes.     

In 1975, Delsarte and Goethals started investigating the restricted matrices equipped with the rank distance, in connection with alternating bilinear mappings \cite{delsarte1975alternating}. The theory of restricted matrices was motivated in \cite{delsarte1978bilinear} and \cite{delsarte1975alternating} by applications in classical coding theory. Indeed, the evaluations of forms found in \cite{delsarte1975alternating} led to the discovery of second-order Reed-Muller codes, including identifying Kerdock codes and constructing the Delsarte-Goethals chain of codes. Further, restricted rank-metric codes have been studied in
\cite{zhou2020equivalence,  trombetti2021maximum, bik2023higher, gabidulin2004symmetric, kadir2022encoding, mushrraf2024perfect, schmidt2010symmetric, puchinger2018construction, de2021hermitian, abiad2024eigenvalue, gruica2023rank}.

Perfect codes are those codes that achieve the Hamming/Sphere-packing bound with equality, and they represent the codes with maximum efficiency of their use of the ambient space. Remarkable examples of perfect Hamming-metric code are the one introduced by Hamming \cite{hamming1950error} in 1950 which corrects a single error, and the Golay codes presented by Golay \cite{golay1949notes} in 1949 which corrects multiple errors. Perfect codes in the Hamming metric are rare and highly structured,  \cite{huffman2010fundamentals}. Also, in the context of rank-metric (eventually restricted) codes one can prove Hamming/sphere-packing bound. So we can give the notion of perfect codes also in this context. Loidreau in \cite{loidreau2006properties} studied perfect codes in the context of rank-metric codes and proved that there are no non-trivial perfect codes in this case. Later, \cite{abiad2024eigenvalue} and \cite{mushrraf2024perfect} discussed perfect codes in restricted rank-metric codes.

Perfect codes form a special class of covering codes, { indeed a perfect code $\mathscr{C}$ of length $n$ has covering radius $(d-1)/2$, where $d$ is its minimum distance. This means that for every vector $v\in \mathbb{F}^n_q$, there exists a codeword $c$ such that the distance between $v$ and $c$ is at most $(d-1)/2$. Covering codes have been widely studied in the rank-metric, for further details see \cite{bartoli2024saturating,bonini2023saturating,bonini2024geometry,zullo2024saturating}. This work contributes to this growing field by examining the relation of packing density and covering codes in Hermitian rank-metric spaces}. 
\begin{definition}
   A matrix $A\in \mathbb{F}^{n\times n}_{q^2}$ is said to be Hermitian if it satisfies the condition $A^{*}=A$, where $A^{*}$ denotes the conjugate transpose of $A$. The set of all $n$-order Hermitian matrices over the field $\mathbb{F}_{q^2}$ is represented as $\mathrm{H}_{n}(q^2)$.
\end{definition}
Note that $\mathrm{H}_n(q^2)$ is a vector space over $\mathbb{F}_q$ of dimension $n^2$, and so $|\mathrm{H}_n(q^2)|=q^{n^2}$. Hermitian rank-metric codes are defined as $\mathbb{F}_q$-subspaces in $\mathrm{H}_{n}(q^2)$ endowed with rank distance. The rank distance, defined as the rank of the difference of two matrices $A$ and $B$, is given by 
\[d(A,B)=rk(A-B).\] 
The minimum rank distance of a Hermitian rank-metric code is defined, as usual, as
  \[d(\mathscr{C})=\min\{d(A,B) \colon A,B \in \mathscr{C}, A\ne B\}.\]
 The parameters of a Hermitian rank-metric code $\mathscr{C}\subseteq \mathbb{F}^{n\times n}_{q^2}$ are $(n,|\mathscr{C}|,d(\mathscr{C}))$ and they are related by the Singleton-like bound.
\begin{theorem}[\text{\cite[Theorem 1]{schmidt2018hermitian}}]\label{thm:dim}
    Let $\mathscr{C}$ be a Hermitian rank-metric code in $\mathrm{H}_{n}(q^2)$ of minimum distance $d$. Then 
    \[{dim_{\mathbb{F}_q}(\mathscr{C})\leq n(n-d+1)}.\]   
\end{theorem}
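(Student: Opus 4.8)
The plan is to mimic the classical puncturing/projection argument used for MDS codes and for ordinary rank-metric codes, adapted to the Hermitian setting. Fix a Hermitian rank-metric code $\mathscr{C} \subseteq \mathrm{H}_n(q^2)$ with minimum distance $d$. The key object is a suitable $\mathbb{F}_q$-linear projection map $\pi$ that kills enough coordinates so that it becomes injective on $\mathscr{C}$, while having an image whose $\mathbb{F}_q$-dimension we can bound from above. Concretely, I would look at the map that restricts a Hermitian matrix to its first $d-1$ rows (equivalently, post-composition with the projection $\mathbb{F}_{q^2}^{n\times n} \to \mathbb{F}_{q^2}^{(d-1)\times n}$), or a symmetrized variant that stays inside a space of "truncated Hermitian" matrices.

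First I would show $\pi$ is injective on $\mathscr{C}$: if $A, B \in \mathscr{C}$ agree on their first $d-1$ rows, then $A - B$ is a Hermitian matrix whose first $d-1$ rows vanish; because $A-B$ is Hermitian, its first $d-1$ columns also vanish, so $A-B$ is supported on a bottom-right $(n-d+1)\times(n-d+1)$ block, forcing $\mathrm{rk}(A-B) \le n-d+1 < d$, hence $A - B = 0$ by the minimum distance hypothesis. (Here one must be a little careful: the naive "delete $d-1$ rows" only gives rank $\le n-d+1$, which is $< d$ only when $d-1 \ge n-d+1$, i.e. $d \ge (n+2)/2$; for smaller $d$ one instead deletes rows so that the surviving support has rank strictly less than $d$ — delete the last $d-1$ rows and the argument gives an $(n-d+1)$-dimensional row support but one should instead argue via the number of surviving rows directly, choosing to keep exactly $n-d+1$ rows so a nonzero difference would have rank $\le n-d+1$; one then needs $n - d + 1 < d$ only in the degenerate regime, so the cleaner route is to keep $n-d+1$ rows and note any nonzero Hermitian matrix supported there has rank at least... ). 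This is exactly the step I expect to be the main obstacle: getting the bookkeeping right so that injectivity of the projection is genuinely forced by $d(\mathscr{C}) = d$ for all admissible $d$, rather than only in half the range. The fix is standard — one argues that a nonzero codeword in the kernel of the projection onto $n-d+1$ suitably chosen rows would have rank at most $n-d+1$, and since nonzero codewords have rank $\ge d$, injectivity holds as soon as the projection is set up to cut the rank below $d$; the Hermitian symmetry is what lets a row-projection simultaneously control the column support.

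Second, I would bound $\dim_{\mathbb{F}_q} \mathrm{Im}(\pi)$: the image lies inside the $\mathbb{F}_q$-space of all $(n-d+1)\times n$ matrices over $\mathbb{F}_{q^2}$ that arise as the top block of a Hermitian matrix, which has $\mathbb{F}_q$-dimension at most $n(n-d+1)$ — indeed an arbitrary $(n-d+1)\times n$ matrix over $\mathbb{F}_{q^2}$ already has $\mathbb{F}_q$-dimension $2n(n-d+1)$, but the Hermitian constraint on the overlap with the transpose block (the top-left $(n-d+1)\times(n-d+1)$ corner must itself be Hermitian) cuts this down, and a direct count gives exactly $n(n-d+1)$. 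Combining injectivity with this dimension count yields $\dim_{\mathbb{F}_q}(\mathscr{C}) = \dim_{\mathbb{F}_q} \mathrm{Im}(\pi) \le n(n-d+1)$, as claimed. I would present the dimension count as a short lemma on the space of truncated Hermitian matrices and then assemble the two steps in one paragraph.
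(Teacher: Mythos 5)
This statement is quoted from Schmidt's paper (\cite[Theorem~1]{schmidt2018hermitian}); the present paper gives no proof of it, so there is nothing in-paper to compare your argument to. Evaluating your proposal on its own terms: the injectivity half is fine once you settle on keeping $k=n-d+1$ rows (a codeword in the kernel has its first $n-d+1$ rows, hence by Hermitian symmetry its first $n-d+1$ columns, equal to zero, so it is supported on a $(d-1)\times(d-1)$ block and has rank at most $d-1<d$). The genuine gap is in the second step, the dimension count of the image. The space of ``top $k$ rows of a Hermitian matrix'' has $\mathbb{F}_q$-dimension $k^2+2k(n-k)=k(2n-k)$: the $k\times k$ corner is itself Hermitian (dimension $k^2$ over $\mathbb{F}_q$), but the remaining $k\times(n-k)$ block is completely free over $\mathbb{F}_{q^2}$ (dimension $2k(n-k)$), since its ``mirror'' entries lie in the deleted rows. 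With $k=n-d+1$ this is $(n-d+1)(n+d-1)=n(n-d+1)+(d-1)(n-d+1)$, which strictly exceeds the claimed bound $n(n-d+1)$ whenever $d\geq 2$. Your assertion that ``a direct count gives exactly $n(n-d+1)$'' is therefore false: the Hermitian constraint only halves the corner block, not the whole strip.

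So your argument proves only the weaker bound $\dim_{\mathbb{F}_q}(\mathscr{C})\leq (n-d+1)(n+d-1)$, and this is not a bookkeeping issue that can be patched: already for $n=2$, $d=2$ your method gives $3$ while the true bound is $2$, and one can check that no $3$-dimensional $\mathbb{F}_q$-subspace of $\mathrm{H}_2(q^2)$ has all nonzero elements of rank $2$, so the sharp bound really does require more than puncturing. No choice of coordinate projection works either: a projection injective on every code of minimum distance $d$ must have kernel consisting of Hermitian matrices of rank at most $d-1$, and the natural subspaces with that property (supported on $j$ rows and columns with $2j\leq d-1$) are too small to force the image down to $n(n-d+1)$. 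Schmidt's proof goes through the association scheme of Hermitian forms and a Delsarte-type argument rather than a projection; if you want to include a proof you should either reproduce that machinery or simply keep the citation.
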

It is stated that $\mathscr{C}$ is a \textbf{Hermitian Maximum Rank Distance} (or \textbf{Hermitian MRD}) code if its parameters satisfy the equality in Theorem \ref{thm:dim}.

By using classical arguments in the Hamming metric, we provide the sphere-packing bound in the context of Hermitian rank-metric codes. We then prove bounds on the size of the balls, which we use to prove that non-trivial perfect codes do not exist in Hermitian rank-metric codes. We describe their packing properties via packing densities, obtaining bounds on packing density.

Section \ref{sec: 2} establishes upper and lower bounds on the size of spheres and balls in the space of Hermitian matrices. In Section \ref{sec; 3}, we prove the sphere packing-bound within the context of Hermitian rank-metric codes and investigate perfect codes. Finally, in Section \ref{sec : 4}, we discuss their packing properties and find bounds on the {packing density}.  
 
\section{Bounds on sphere and ball size}\label{sec: 2}
Given a matrix $M\in \mathrm{H}_{n}(q^2)$, the sphere having radius $r\in \mathbb{N}_0$  and centred at $M$ is defined as \[ S(M,r)=\{ N \in  \mathrm{H}_n(q^2) \colon \mathrm{rk}(M-N)=r\}, \]
and the ball $r\in \mathbb{N}_0$ with center $M$ is defined as
\[B(M,r)=\{ N \in  \mathrm{H}_n(q^2) \colon \mathrm{rk}(M-N)\leq r\},\]
and turns out that
\[ B(M,r)=\cup_{i=0}^r S(M,i). \]
It can be proved that the ball size and the sphere size is independent of its center and we denote the sphere size of radius $r$ by $S_r$ and the ball size of radius $r$ by $B_r$. Also, note that
\begin{equation*}
  B_t=\sum_{r=0}^{t}S_r.  
\end{equation*}
Therefore, $S_r$ represents the number of $n$-by-$n$ Hermitian matrices with rank $r$, and $B_r$ represents the number of $n$-by-$n$ Hermitian matrices and rank less than or equal to $r$. The value of $S_r$ and $B_r$ depends only on $q$, $n$, and $r$, as we will see in a few lines.
The Gaussian binomial coefficient of non-negative integers $m\leq n$ is expressed as
\[bin_{q}(n,m)=\prod_{i=1}^{m}\frac{q^{n-i+1}-1}{q^{i}-1}.\]
We will need the $q^2$-binomial coefficients for which the following hold.
\begin{lemma}\label{bin}
 Let $0\leq m\leq n$ and $q\geq2$.
 Then
 \[bin_{q^2}(n,m)\leq q^{2m(n-m)+2},\]
 and 
 \[bin_{q^2}(n,m)\geq q^{2m(n-m)}. \]
\end{lemma}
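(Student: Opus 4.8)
The plan is to estimate the product defining $bin_{q^2}(n,m)=\prod_{i=1}^{m}\frac{q^{2(n-i+1)}-1}{q^{2i}-1}$ factor by factor, comparing each ratio with the corresponding power of $q$. First I would rewrite each factor as
\[
\frac{q^{2(n-i+1)}-1}{q^{2i}-1}=q^{2(n-2i+1)}\cdot\frac{1-q^{-2(n-i+1)}}{1-q^{-2i}},
\]
so that the whole product becomes $q^{2\sum_{i=1}^{m}(n-2i+1)}$ times a correction factor $\prod_{i=1}^{m}\frac{1-q^{-2(n-i+1)}}{1-q^{-2i}}$. A quick computation gives $\sum_{i=1}^{m}(n-2i+1)=mn-m(m+1)+m=mn-m^2=m(n-m)$, which pins down the main exponent $q^{2m(n-m)}$; everything then reduces to bounding the correction factor between $1$ and $q^{2}$.

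For the lower bound $bin_{q^2}(n,m)\geq q^{2m(n-m)}$, I would argue that the correction factor is at least $1$: numerators are $1-q^{-2(n-i+1)}$ and denominators are $1-q^{-2i}$, and since $n-i+1\geq i$ is \emph{not} always true, I would instead pair things more carefully — note that $\prod_{i=1}^m(1-q^{-2(n-i+1)})=\prod_{j=n-m+1}^{n}(1-q^{-2j})$, so the correction factor is $\frac{\prod_{j=n-m+1}^{n}(1-q^{-2j})}{\prod_{i=1}^{m}(1-q^{-2i})}$, a ratio of products of $m$ factors each of the form $(1-q^{-2\ell})$ with the numerator indices $\{n-m+1,\dots,n\}$ all $\geq$ the denominator indices $\{1,\dots,m\}$; since $1-q^{-2\ell}$ is increasing in $\ell$, the numerator product dominates the denominator product, giving correction factor $\geq 1$. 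Alternatively, and more simply, one can observe directly from the standard identity $bin_{q^2}(n,m)=\sum$ (count of subspaces) that it is a polynomial in $q^2$ with leading term $q^{2m(n-m)}$ and nonnegative coefficients, hence $\geq q^{2m(n-m)}$; I would use whichever is cleaner.

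For the upper bound $bin_{q^2}(n,m)\leq q^{2m(n-m)+2}$, I would bound the correction factor above. Each numerator factor is $<1$, so the correction factor is at most $\prod_{i=1}^{m}\frac{1}{1-q^{-2i}}$. The task is then to show $\prod_{i=1}^{\infty}\frac{1}{1-q^{-2i}}\leq q^{2}$ for $q\geq 2$. Taking logarithms, $\sum_{i=1}^{\infty}-\log(1-q^{-2i})\leq \sum_{i=1}^{\infty}\frac{q^{-2i}}{1-q^{-2i}}\leq \frac{1}{1-q^{-2}}\sum_{i=1}^{\infty}q^{-2i}=\frac{q^{-2}}{(1-q^{-2})^2}$, and for $q\geq 2$ this is at most $\tfrac{1/4}{(3/4)^2}=\tfrac{4}{9}<2\log 2$, so the product is at most $q^{2}$ (in fact much smaller). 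This is the step I expect to need the most care: getting an explicit, clean constant so that the crude bound $q^{2m(n-m)+2}$ genuinely holds for all $q\geq 2$ and all $0\leq m\leq n$, including the degenerate cases $m=0$ and $m=n$ where both sides should be checked directly ($bin_{q^2}(n,0)=bin_{q^2}(n,n)=1$, and $q^{2}\geq 1$, so these are fine). The main obstacle is purely the infinite-product/series estimate; once that constant is secured, assembling the two inequalities is immediate.
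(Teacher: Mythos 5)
Your proposal is correct, and it takes a genuinely different (and more self-contained) route than the paper. The paper simply imports the upper bound $bin_{q}(n,m)\leq\frac{111}{32}q^{m(n-m)}$ from a cited lemma of Ihringer, substitutes $q^2$ for $q$, and observes that $\frac{111}{32}\leq q^2$ for $q\geq 2$; the lower bound is merely asserted as ``easy to see.'' You instead factor each term as $q^{2(n-2i+1)}\cdot\frac{1-q^{-2(n-i+1)}}{1-q^{-2i}}$, verify that the exponents sum to $2m(n-m)$, and then control the correction factor on both sides: your reindexing argument (numerator indices $n-m+1,\dots,n$ dominate denominator indices $1,\dots,m$ termwise, and $1-q^{-2\ell}$ is increasing in $\ell$) is a clean, complete proof of the lower bound that the paper leaves unjustified, and your infinite-product estimate $\prod_{i\geq 1}(1-q^{-2i})^{-1}\leq e^{q^{-2}/(1-q^{-2})^2}\leq e^{4/9}<4\leq q^2$ replaces the black-box constant $\frac{111}{32}$ with an explicit and in fact sharper bound. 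What the paper's approach buys is brevity; what yours buys is a fully elementary, verifiable argument with no external dependency, plus an actual proof of the lower bound. All the computations check out (including the reduction $\sum_{i=1}^m(n-2i+1)=m(n-m)$ and the boundary cases $m\in\{0,n\}$), so this is a valid alternative proof.
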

\begin{proof}
    The upper bound is given in \text{\cite[Lemma 2.1]{ihringer2015finite}} as  
     \[bin_{q}(n,m)\leq{\frac{111}{32}}q^{m(n-m)},\]
     so that 
     \[ bin_{q^2}(n,m)\leq{\frac{111}{32}}q^{2m(n-m)}\leq q^{2m(n-m)+2}.\]
     It is easy to see that the lower bound of the $q$-binomial coefficient is 
     \[bin_{q}(n,m)\geq q^{m(n-m)},\]
     and so 
     \[bin_{q^2}(n,m)\geq q^{2m(n-m)}.\]     
\end{proof}
 Throughout this paper, we will extensively use the value of $bin_{q^2}(m,n)$  and its bounds. In the following theorem, the value of $S_t$ and $B_t$ has been determined, see \cite{carlitz1955representations} and also \cite{gow2012constant}.
\begin{theorem}[\text{\cite[Theorem 3]{carlitz1955representations}}]\label{lem:rhm}
Let $0 \leq t \leq n$ be an integer. We have 
\[S_t=|\{N\in \mathrm{H}_n(q^2)| rk(N)=t\}|=bin_{q^{2}}(n,t)q^{\frac{t(t-1)}{2}}\prod_{s=1}^{t}(q^s+(-1)^s).\]  
\end{theorem}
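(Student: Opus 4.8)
The plan is to reduce the count of rank-$t$ Hermitian matrices to the count of \emph{nonsingular} $t\times t$ Hermitian matrices, and then evaluate the latter by an orbit--stabilizer argument. First I would record the underlying linear algebra: if $N\in\mathrm{H}_n(q^2)$ has rank $t$, then its row space $W\subseteq\mathbb{F}_{q^2}^n$ has dimension $t$, and since $N=N^*$ the column space of $N$ is the coordinatewise conjugate of $W$. Fix a $t\times n$ matrix $A$ of rank $t$ whose rows form a basis of $W$. I then claim that $H\mapsto A^*HA$ is a bijection from the set of nonsingular $t\times t$ Hermitian matrices onto the set of rank-$t$ Hermitian matrices with row space $W$: the image is contained in the target because $(A^*HA)^*=A^*HA$ and, since $A$ admits a right inverse and $A^*$ a left inverse, conjugation by $A$ preserves rank (so $\mathrm{rk}(A^*HA)=\mathrm{rk}(H)=t$ and its row space is forced to be all of $W$); injectivity follows from the same one-sided inverses; surjectivity follows by writing an arbitrary such $N$ as $N=BA$ with $\mathrm{rk}(B)=t$, observing that the column space of $B$ coincides with that of $A^*$, hence $B=A^*H$ with $H$ nonsingular, and finally noting that $N=N^*$ forces $H=H^*$. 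Since the row space can be an arbitrary $t$-dimensional subspace of $\mathbb{F}_{q^2}^n$, of which there are $bin_{q^2}(n,t)$, and the bijection above does not depend (up to relabelling $H$) on the choice of $A$, this yields
\[
S_t=bin_{q^2}(n,t)\cdot h_t,\qquad h_t:=\bigl|\{H\in\mathrm{H}_t(q^2)\colon \mathrm{rk}(H)=t\}\bigr|.
\]

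Next I would compute $h_t$. The group $\mathrm{GL}_t(\mathbb{F}_{q^2})$ acts on nonsingular $t\times t$ Hermitian matrices by $H\mapsto P^*HP$, and since over a finite field there is exactly one equivalence class of nondegenerate Hermitian forms in each dimension, this action is transitive; the stabilizer of $I_t$ is the unitary group $\mathrm{GU}_t(q)=\{P\colon P^*P=I_t\}$. By orbit--stabilizer,
\[
h_t=\frac{|\mathrm{GL}_t(\mathbb{F}_{q^2})|}{|\mathrm{GU}_t(q)|}
=\frac{q^{t(t-1)}\prod_{i=1}^{t}(q^{2i}-1)}{q^{t(t-1)/2}\prod_{i=1}^{t}(q^i-(-1)^i)}
=q^{t(t-1)/2}\prod_{i=1}^{t}(q^i+(-1)^i),
\]
where the last step uses $q^{2i}-1=(q^i-1)(q^i+1)$ together with a parity check: for $i$ odd the factor $q^i-(-1)^i=q^i+1$ cancels, leaving $q^i-1$; for $i$ even the factor $q^i-1$ cancels, leaving $q^i+1$; in either case the surviving factor equals $q^i+(-1)^i$.

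Combining the two displays gives
\[
S_t=bin_{q^2}(n,t)\,q^{t(t-1)/2}\prod_{s=1}^{t}(q^s+(-1)^s),
\]
which is the assertion. As a sanity check one can verify this directly for small $t$; for instance, for $t=1$ it reads $bin_{q^2}(n,1)(q-1)$, which matches the count of rank-one Hermitian matrices obtained from a $1$-dimensional row space together with a nonzero diagonal scalar in $\mathbb{F}_q^{*}$, and for $t=2$ it agrees with $q^4-B_1$ when $n=2$.

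I expect the main obstacle to be the linear-algebra reduction in the first step: one must check carefully that $H\mapsto A^*HA$ both lands in and surjects onto the rank-$t$ Hermitian matrices with the prescribed row space, and that the row space is a complete invariant for this fibration. The group-theoretic input -- the uniqueness of the nondegenerate Hermitian form over $\mathbb{F}_{q^2}$ and the order formula $|\mathrm{GU}_t(q)|=q^{t(t-1)/2}\prod_{i=1}^{t}(q^i-(-1)^i)$ -- is classical and may simply be cited. Alternatively, $h_t$ can be produced by the elementary recursion $h_t=q^{t-1}(q^t+(-1)^t)\,h_{t-1}$, proved via a block decomposition and a Schur-complement computation, at the cost of separately handling those matrices whose leading $(t-1)\times(t-1)$ principal submatrix is singular.
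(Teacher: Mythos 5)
Your argument is correct, but note that the paper does not prove this statement at all: it is quoted verbatim as Theorem~3 of Carlitz--Hodges, so there is no internal proof to match. What you supply is a genuine, self-contained derivation, and it checks out. The fibration step is sound: for a fixed $t$-dimensional row space $W$ with basis matrix $A$, the map $H\mapsto A^*HA$ does land in the Hermitian matrices of rank $t$ with row space $W$ (the one-sided inverses of $A$ and $A^*$ preserve rank), is injective for the same reason, and is surjective because $N=BA$ forces $\mathrm{col}(B)=\mathrm{col}(N)=\overline{W}=\mathrm{col}(A^*)$, hence $B=A^*H$ with $H$ invertible and then Hermitian. The orbit--stabilizer computation is also right: transitivity of $\mathrm{GL}_t(\mathbb{F}_{q^2})$ on nondegenerate Hermitian forms is the classical uniqueness of such forms over finite fields, the stabilizer of $I_t$ is $\mathrm{GU}_t(q)$, and the quotient
\[
\frac{q^{t(t-1)}\prod_{i=1}^{t}(q^{2i}-1)}{q^{t(t-1)/2}\prod_{i=1}^{t}\bigl(q^i-(-1)^i\bigr)}=q^{t(t-1)/2}\prod_{i=1}^{t}\bigl(q^i+(-1)^i\bigr)
\]
simplifies exactly as you describe via the parity split of $q^{2i}-1=(q^i-1)(q^i+1)$. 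This route is cleaner and more structural than the original Carlitz--Hodges derivation, which obtains the rank distribution through representation counts for Hermitian forms (character-sum and recursion techniques); the price you pay is importing two classical group-theoretic facts (uniqueness of the nondegenerate Hermitian form and the order of $\mathrm{GU}_t(q)$), which is entirely reasonable to cite. Your $t=1$ and $t=2$ sanity checks are consistent with the values of $B_1$ and $B_2$ used elsewhere in the paper.
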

Now, we can get the upper and lower bounds of $S_t$ and $B_t$ in the next proposition.
\begin{proposition}\label{prop: Boundsize}
For any $t\in\{0,\ldots,n\}$, we have
\[q^{t(2n-t-1)}\leq S_t\leq q^{t(2n-t+1)+2},\]
and
\[q^{t(2n-t-1)}\leq B_t\leq q^{t(2n-t+1)+3}.\]
\end{proposition}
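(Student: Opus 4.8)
The plan is to start from the exact count in Theorem~\ref{lem:rhm} and estimate its three factors separately. Writing $S_t = bin_{q^2}(n,t)\, q^{t(t-1)/2}\prod_{s=1}^{t}(q^s+(-1)^s)$, the only factor not already controlled by Lemma~\ref{bin} is the product $P_t:=\prod_{s=1}^{t}(q^s+(-1)^s)$. For $q\ge 2$ one has the elementary squeeze $q^{s-1}\le q^s-1\le q^s+(-1)^s\le q^s+1\le q^{s+1}$ (the left inequality since $q^{s-1}(q-1)\ge 1$, the right since $q^s+1\le 2q^s\le q^{s+1}$). Multiplying over $s=1,\dots,t$ and using $\sum_{s=1}^{t}(s-1)=t(t-1)/2$ and $\sum_{s=1}^{t}(s+1)=t(t+1)/2+t$ yields $q^{t(t-1)/2}\le P_t\le q^{t(t+1)/2+t}$.

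Next I would feed these into the formula together with $q^{2t(n-t)}\le bin_{q^2}(n,t)\le q^{2t(n-t)+2}$ from Lemma~\ref{bin} and collect exponents. For the lower bound, $S_t\ge q^{2t(n-t)}\cdot q^{t(t-1)/2}\cdot q^{t(t-1)/2}=q^{2t(n-t)+t(t-1)}=q^{t(2n-t-1)}$. For the upper bound, $S_t\le q^{2t(n-t)+2}\cdot q^{t(t-1)/2}\cdot q^{t(t+1)/2+t}=q^{2t(n-t)+2+t^2+t}=q^{t(2n-t+1)+2}$, using $t(t-1)/2+t(t+1)/2=t^2$. This settles the sphere bounds, with the edge case $t=0$ (where $S_0=1$) checked directly.

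For $B_t=\sum_{r=0}^{t}S_r$, the lower bound is immediate since $B_t\ge S_t\ge q^{t(2n-t-1)}$. The upper bound is the only real obstacle: bounding each summand by the largest one only gives $B_t\le (t+1)q^{t(2n-t+1)+2}$, and the factor $t+1$ is not absorbed by a single extra power of $q$ when $t$ is large. To fix this I would exploit that the exponent $f(r):=r(2n-r+1)$ grows quickly: a short computation gives $f(r)-f(r-1)=2(n-r+1)\ge 2$ for $r\le n$, so telescoping yields $f(t)-f(r)\ge 2(t-r)$ for $0\le r\le t\le n$. Hence $q^{f(r)}\le q^{f(t)}q^{-2(t-r)}$, and therefore $\sum_{r=0}^{t}q^{f(r)+2}\le q^{f(t)+2}\sum_{k\ge 0}q^{-2k}=q^{f(t)+2}\cdot\frac{q^2}{q^2-1}\le q^{f(t)+2}\cdot 2\le q^{f(t)+3}=q^{t(2n-t+1)+3}$, where $\frac{q^2}{q^2-1}\le\frac43\le 2\le q$ for $q\ge 2$. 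The key point to handle with care is precisely this geometric-decay step; the remainder is exponent bookkeeping.
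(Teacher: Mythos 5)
Your proof is correct and follows essentially the same route as the paper: bound the three factors of the Carlitz--Hodges formula using Lemma~\ref{bin} and the elementary squeeze $q^{s-1}\le q^s+(-1)^s\le q^{s+1}$, then control $B_t$ by factoring out the dominant summand and bounding the rest by a geometric series. Your version of that last step (per-term ratio at most $q^{-2}$, so the tail sums to at most $q^2/(q^2-1)\le q$) is in fact slightly more robust than the paper's, whose closed form $\frac{1-q^{2(t+1)(t-n)}}{1-q^{2(t-n)}}$ degenerates when $t=n$; otherwise the two arguments coincide.
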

\begin{proof}
    It is true when $t=0$, so assume that $t\geq1$ and let us start by determining the bounds on $S_t$. By using Lemma \ref{bin}
    %\[S_t=bin_{q^{2}}(n,t)q^{\frac{t(t-1)}{2}}\Pi_{s=1}^{t}(q^s-(-1)^s).\] 
    \[bin_{q^{2}}(n,t)\leq q^{2t(n-t)+2},\]
    %we can bound $bin_{q^2}(n,t)$ as follows \[bin_{q^2}(n,t)\leq q^{2t(n-t)+t}.\]
    and
    \[\prod_{s=1}^{t}(q^{s}+(-1)^{s})\leq\prod_{s=1}^{t}q^{(s+1)}=q^{t+\frac{t(t+1)}{2}}.\]
    So, we derive
    \begin{equation}\label{eq:1}
       S_t\leq q^{t(2n-t+1)+2}.
    \end{equation}
    For the lower bound on $S_t$, by Lemma \ref{bin}
    %\[bin_{q^2}(n,t)=\Pi_{s=1}^t\frac{q^{2(n+t+s)-1}}{q^{2s}-1}\geq\Pi_{s=1}^t\frac{q^{2(n+t+s)-1}}{q^{2s}}=q^{2t(n-t)-t(t+1)}\Pi_{s=1}^{t}(q^{2s}-1),\]
    \[bin_{q^2}(n,t)\geq q^{2t(n-t)},\]
   % as \[\Pi_{s=1}^{t}(q^{2s}-1)\geq \Pi_{s=1}^{t}q^{2s-1}=q^{t(t+1)-t},\]
    % This implies that
    %\[bin_{q^2}(n,t)\geq q^{2t(n-t)-t},\]
    and we observe that 
    \[\prod_{s=1}^{t}(q^{s}+(-1)^{s})\geq\prod_{i=1}^{t}q^{(s-1)}=q^{\frac{t(t+1)}{2}-t}.\]
    So,
    \begin{equation}\label{eq:2}
    S_t\geq q^{t(2n-t-1)}. 
    \end{equation}
      From \eqref{eq:1} and \eqref{eq:2}, we get the desired bounds on $S_t$
     \[q^{t(2n-t-1)}\leq S_t\leq q^{t(2n-t+1)+2}.\]
     Now we analyze the case of the ball. We know that a lower bound on the sphere size implies a lower bound on the ball size having the same radius. So, we only need to prove the upper bound on the ball size. We have
 \[ B_t\leq \sum _{i=0}^{t}q^{i(2n-i+1)+2} =q^{t(2n-t+1)+2}\left( 1+\sum _{i=0}^{t-1}q^{2n(i-t)+t^2-i^2-(t-i)} \right).\]
 As, for any $i\leq t-1$  \[ 2n(i-t)+t^2-i^2-(t-i)\leq 2(t-i)(t-n),\]
 so, 
 \[B_t\leq q^{t(2n-t+1)+2}\left( 1+\sum _{i=0}^{t-1} q^{2(t-i)(t-n)}\right),\]
and by changing the variables, we obtain that
    % \begin{equation}\label{eq:3}
        \[ B_t\leq q^{t(2n-t+1)+2}(1+\sum _{j=1}^{t}q^{2j(t-n)}).\]
     %\end{equation}
     Observe that
     \[1+\sum _{j=1}^{t}q^{2j(t-n)}=\frac{1-q^{2(t+1)(t-n)}}{1-q^{2(t-n)}}\leq \frac{1}{1-q^{2(t-n)}}\leq q.\]
     This provides the required result for $B_t$. 
\end{proof} 
In the next section, we will use these bounds to study perfect codes in the context of Hermitian rank-metric codes.
\section{Perfect codes }\label{sec; 3}
In this section, we will investigate the sphere-packing bounds, and show that there are no non-trivial perfect codes.
\begin{theorem}[Sphere-packing bound]
Let $\mathscr{C}\subseteq \mathrm{H}_n(q^2)$ be an Hermitian rank-metric code with minimum distance $d$ and $|\mathscr{C}|=M$. If $t=\lfloor\frac{d-1}{2}\rfloor$, then 
\[MB_{t}\leq q^{n^2}.\]
\end{theorem}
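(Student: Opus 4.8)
The plan is to mimic the classical sphere-packing (Hamming bound) argument, which works in any metric space where balls of a fixed radius all have the same cardinality — a fact already established in Section \ref{sec: 2} for the rank metric on $\mathrm{H}_n(q^2)$. The key observation is that if $\mathscr{C}$ has minimum distance $d$ and $t=\lfloor (d-1)/2\rfloor$, then the balls of radius $t$ centred at distinct codewords are pairwise disjoint.

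First I would verify this disjointness via the triangle inequality for the rank distance: if $N\in B(A,t)\cap B(B,t)$ for two codewords $A\neq B$, then $d(A,B)\leq d(A,N)+d(N,B)\leq 2t\leq d-1<d$, contradicting the definition of minimum distance. (Here I use that $\mathrm{rk}(\cdot)$ induces a genuine metric — symmetry and the triangle inequality both hold since $\mathrm{rk}(X+Y)\leq \mathrm{rk}(X)+\mathrm{rk}(Y)$ and $\mathrm{rk}(X)=\mathrm{rk}(-X)$, and note $\mathrm{H}_n(q^2)$ is closed under subtraction so $A-N, N-B \in \mathrm{H}_n(q^2)$.)

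Next I would assemble the counting argument. Since the balls $B(A,t)$ for $A\in\mathscr{C}$ are pairwise disjoint subsets of $\mathrm{H}_n(q^2)$, and each has size $B_t$ (independent of the centre, as noted in Section \ref{sec: 2}), we get
\[
\sum_{A\in\mathscr{C}} |B(A,t)| = M\,B_t \leq |\mathrm{H}_n(q^2)| = q^{n^2},
\]
where the last equality is the dimension count $\dim_{\mathbb{F}_q}\mathrm{H}_n(q^2)=n^2$ recorded after the definition of Hermitian matrices. This is exactly the claimed bound.

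There is essentially no hard part here: the argument is purely formal once one knows balls have center-independent size (established earlier) and that $\mathrm{rk}$ satisfies the triangle inequality. The only point requiring a word of care is that $t=\lfloor(d-1)/2\rfloor$ forces $2t\leq d-1$, which is immediate, and that the ambient space $\mathrm{H}_n(q^2)$ is an $\mathbb{F}_q$-vector space so differences of Hermitian matrices are again Hermitian — otherwise the balls would not live in the right space. The substance of the paper lies not in this bound but in the subsequent use of the estimates from Proposition \ref{prop: Boundsize} to show the bound cannot be met with equality nontrivially.
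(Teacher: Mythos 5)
Your proposal is correct and follows exactly the same route as the paper's proof: pairwise disjointness of the radius-$t$ balls via the triangle inequality for the rank distance, followed by the counting argument using the centre-independence of the ball size. Your write-up is in fact slightly more careful than the paper's, since you spell out why $2t\leq d-1$ forces the balls to be disjoint and why the differences of codewords stay in $\mathrm{H}_n(q^2)$.
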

\begin{proof}
Let $c_1,c_2\ldots,c_M $ be the codewords of $\mathscr{C}$. From the triangle inequality of the rank metric, we can see that the intersection of $B(c_i,t)$ and $B(c_j,t)$ is empty for any $i,j\in\{1,2\ldots,M\}$ with $i\neq j$.
So,
    \[\left|\bigcup^{M}_{i=1}B(c_{i,t})\right|=\bigcup^{M}_{i=1}\left| B(c_{i,t})\right|\leq |\mathrm{H}_n(q^2)|=q^{n^2},\]
    which implies that 
    \[\sum^{M}_{i=1}B_{t}\leq q^{n^2},\]
    and hence the desired result is obtained.     
\end{proof}
\begin{definition}
    Let $\mathscr{C}\subseteq \mathrm{H}_n(q^2)$  be a Hermitian rank-metric code. We define $\mathscr{C}$ to be a perfect code if it meets sphere packing-bound with equality.
    \end{definition}
    We start to show that non-trivial perfect codes do not exist when $t=\lfloor\frac{d-1}{2}\rfloor=1$. 
  \begin{lemma}\label{lem:t=1}
      Let $\mathscr{C}$ be a Hermitian rank-metric code in $\mathrm{H}_{n}(q^2)$ with minimum distance $d$ and let $t=\lfloor\frac{d-1}{2}\rfloor=1$, then $\mathscr{C}$ is not perfect.
  \end{lemma}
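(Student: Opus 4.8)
The plan is to compute $B_1$ exactly from Theorem~\ref{lem:rhm} and then show that $B_1 \nmid q^{n^2}$, so that the perfection equality $M B_1 = q^{n^2}$ (with $M = |\mathscr{C}|$) is impossible regardless of the value of $M$.

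First I would note that $t = \lfloor (d-1)/2 \rfloor = 1$ forces $d \in \{3,4\}$, so that two distinct codewords must differ in rank by at least $3$; hence we may assume $n \ge 2$, since for $n \le 1$ no Hermitian rank-metric code with $t=1$ exists and the statement is vacuous. Specialising Theorem~\ref{lem:rhm} to $t = 1$ gives
\[
S_1 = bin_{q^2}(n,1)\,(q-1) = \frac{q^{2n}-1}{q^2-1}\,(q-1) = \frac{q^{2n}-1}{q+1},
\]
which is an integer because $q \equiv -1 \pmod{q+1}$ gives $q^{2n} \equiv 1 \pmod{q+1}$. Consequently
\[
B_1 = 1 + S_1 = \frac{q^{2n}+q}{q+1} = q \cdot \frac{q^{2n-1}+1}{q+1},
\]
and since $2n-1$ is odd, $q+1 \mid q^{2n-1}+1$, so $k := \dfrac{q^{2n-1}+1}{q+1}$ is a positive integer; expanding the quotient shows $k = \sum_{i=0}^{2n-2}(-1)^i q^{2n-2-i} \equiv 1 \pmod q$.

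Now I would argue by contradiction: if $\mathscr{C}$ were perfect then $B_1 = qk$ divides $q^{n^2}$, a power of the characteristic prime $p$ of $\mathbb{F}_{q^2}$. But $k \equiv 1 \pmod q$ forces $\gcd(k,q) = 1$, and $k > 1$ since $n \ge 2$ (indeed $k = \frac{q^{2n-1}+1}{q+1} > 1 \iff q^{2n-1} > q \iff n > 1$); therefore $k$ admits a prime divisor $\ell$ with $\ell \nmid q$, and then $\ell \mid B_1 \mid q^{n^2}$ is absurd. Hence no integer $M$ satisfies $M B_1 = q^{n^2}$, and $\mathscr{C}$ is not perfect.

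I do not expect a genuine obstacle here: the whole content is the exact value of $S_1$ from Theorem~\ref{lem:rhm}, after which the argument is a short divisibility computation. The only points needing a little care are the two elementary divisibilities $q+1 \mid q^{2n}-1$ and $q+1 \mid q^{2n-1}+1$, and checking that the cofactor $k$ is strictly greater than $1$ (equivalently, disposing of the degenerate cases $n \le 1$) so that it really contributes a prime different from $p$.
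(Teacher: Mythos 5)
Your proof is correct, but it follows a genuinely different route from the paper. The paper's proof assumes perfection, invokes the Singleton-like bound (Theorem~\ref{thm:dim}) to get $M\leq q^{n(n-2)}$ from $d\geq 3$, and hence $B_1\geq q^{2n}$; comparing this with the exact value $B_1=1+\frac{q^{2n}-1}{q+1}$ yields $q+1\leq 1$, a contradiction. You instead run a divisibility argument: from Theorem~\ref{lem:rhm} you get $B_1=q\cdot\frac{q^{2n-1}+1}{q+1}$, observe that the cofactor $k=\frac{q^{2n-1}+1}{q+1}\equiv 1\pmod q$ is an integer greater than $1$ for $n\geq 2$, and conclude that $B_1$ has a prime divisor coprime to $q$ and therefore cannot divide $q^{n^2}$. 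Your version needs only that $M$ is a positive integer and dispenses with the Singleton-like bound entirely, so it proves the slightly stronger fact that the sphere-packing equality $MB_1=q^{n^2}$ is arithmetically impossible for every $M$; the paper's size-comparison argument is less delicate number-theoretically and is the one that extends more mechanically to the $t=2$ case in Lemma~\ref{lem:t=2}, where the ball size is a messier rational expression. Your handling of the degenerate case $n\leq 1$ (where no code with $t=1$ exists) is a point the paper leaves implicit.
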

\begin{proof}
Suppose on the contrary that $\mathscr{C}$ is a perfect code.
\begin{equation}\label{eq:t=1}
  MB_{1}=q^{n^2}. 
\end{equation}
Since $t=1$, we have $d=3$ or $d=4$ and by {Singleton-like bound}, 
\[M\leq q^{n(n-2)}.\]
Using that $B_1=1+\frac{q^{2n}-1}{q+1}$, Equation \eqref{eq:t=1} becomes 
\[q^{2n}-1\leq \frac{q^{2n}-1}{q+1},\]
which implies $q\leq-1$, a contradiction. Therefore $\mathscr{C}$ cannot be perfect.  
  
\end{proof}
 
In the next Lemma, we analyze the case in which $\lfloor \frac{d-1}{2}\rfloor=2$.
 \begin{lemma}\label{lem:t=2}
    Let $\mathscr{C}$ be a Hermitian rank-metric code in $\mathrm{H}_n(q^2)$ with minimum distance $d$ and let $t=\lfloor\frac{d-1}{2}\rfloor=2$ then $\mathscr{C}$ is not perfect.
\end{lemma}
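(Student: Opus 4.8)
The plan is to argue by contradiction, exactly as in Lemma~\ref{lem:t=1}, but replacing the crude ball estimate by the exact count of Theorem~\ref{lem:rhm}. Suppose $\mathscr{C}$ is perfect with $t=\lfloor(d-1)/2\rfloor=2$; then $d\in\{5,6\}$, and since the rank of an $n\times n$ matrix never exceeds $n$ we have $n\ge d\ge 5$. By the Singleton-like bound (Theorem~\ref{thm:dim}), $\dim_{\mathbb{F}_q}\mathscr{C}\le n(n-d+1)\le n(n-4)$, hence $M=|\mathscr{C}|\le q^{n(n-4)}$. Perfection means $MB_2=q^{n^2}$, so $B_2=q^{n^2}/M\ge q^{\,n^2-n(n-4)}=q^{4n}$. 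It therefore suffices to prove the bound $B_2<q^{4n}$, since then $q^{n^2}=MB_2<q^{n(n-4)}q^{4n}=q^{n^2}$, a contradiction.

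To get that bound, write $B_2=1+S_1+S_2$ and apply Theorem~\ref{lem:rhm}. One has $S_1=bin_{q^2}(n,1)(q-1)=\dfrac{q^{2n}-1}{q+1}<q^{2n-1}$, and, after cancelling $q^{4}-1=(q^2-1)(q^2+1)$ in
\[
S_2=bin_{q^2}(n,2)\,q\,(q-1)(q^2+1)=\frac{q\,(q^{2n}-1)(q^{2n-2}-1)}{(q^2-1)(q+1)},
\]
one uses $(q^2-1)(q+1)=q^3+q^2-q-1\ge q^3$, valid for all $q\ge 2$, to obtain $S_2<q^{4n-4}$. Since $n\ge 2$ we have $q^{2n-1}\le q^{4n-4}$, so $B_2<1+q^{2n-1}+q^{4n-4}\le 3\,q^{4n-4}<q^{4}\cdot q^{4n-4}=q^{4n}$, the last step using $q^{4}\ge 16>3$. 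This contradicts $B_2\ge q^{4n}$, so $\mathscr{C}$ cannot be perfect.

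The one delicate point is the estimate of $S_2$: the generic bound of Proposition~\ref{prop: Boundsize} only yields $B_2\le q^{4n+1}$, which is just short of beating $q^{4n}$. The saving comes from retaining the exact factor $(q-1)(q^2+1)$ rather than bounding $\prod_{s=1}^{2}(q^s+(-1)^s)$ by $q^{5}$ as in the proof of that proposition, together with the cancellation of the denominator $q^{4}-1$; this is exactly what pushes the exponent below $4n$ and closes the gap. I expect the only thing to watch is the bookkeeping of exponents in that step, since the whole argument hinges on the inequality $S_2<q^{4n-4}$.
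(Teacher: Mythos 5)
Your proof is correct and follows essentially the same route as the paper: assume perfection, combine the Singleton-like bound $M\le q^{n(n-4)}$ with the exact value of $B_2$ from Theorem~\ref{lem:rhm}, and derive a contradiction. Your bookkeeping (showing $S_2<q^{4n-4}$ via the cancellation of $q^2+1$ and hence $B_2<3q^{4n-4}<q^{4n}$) checks out and is in fact a cleaner way to reach the same contradiction than the paper's chain of inequalities.
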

\begin{proof}
    Suppose on the contrary that $\mathscr{C}$ is perfect. Then 
    \begin{equation}\label{eq:t=2}
        MB_{2}=q^{n^2}.
    \end{equation}
    Using that $M\leq q^{n(n-4)}$, from Theorem \ref{lem:rhm}, and that $B_2=1+\frac{q^{2n}-1}{q+1}+\frac{(q^{2n}-1)(q^{2(n-1)}-1)q}{(q+1)(q^2-1)}$, we have \eqref{eq:t=2} that 
    \[q^{n^2-4n}\left(1+\frac{q^{2n}-1}{q+1}+\frac{q(q^{2n}-1)(q^{2n-2}-1)}{(q+1)(q^2-1)}\right)\geq q^{n^2},\]
    that is
    %\[1+\frac{q^{2n}-1}{q+1}+\frac{q(q^{2n}-1)(q^{2n-2}-1)}{q+1)(q^2-1)}\geq q^{4n}\]
    \[\frac{q^{2n}-1}{q+1}+\frac{q(q^{2n}-1)(q^{2n-2}-1)}{q+1)(q^2-1)}\geq q^{4n}-1,\]
  from which we derive
    %\[\frac{(q^{2n}-1)(q^2-1+q(q^{2n-2}-1))}{(q^2-1)(q+1)}\geq q^{4n}-1\]
    \[q^{2n-1}+q^2-q-1\geq (q^{2n}+1)(q^2-1)(q+1).\]
    %\[q^{2n-1}+q^2-q-1\geq q^{2n+3}-q^{2n+1}+q^{2n+2}-q^{2n}+q^3+q^2-q-1 \]
      Therefore
    \[q^{2n-1}\geq  q^{2n+3}-q^{2n+1}+q^{2n+2}-q^{2n}+q^3,\] 
    which is a contradiction. 
 
\end{proof}
In the following theorem, we prove perfect codes do not exist in this context.
\begin{theorem}\label{main; thm}
There are no non-trivial perfect Hermitian rank-metric codes.
\end{theorem}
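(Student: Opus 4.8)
The strategy is to reduce the general case to the two base cases already handled in Lemmas~\ref{lem:t=1} and~\ref{lem:t=2}, and to dispose of all $t\geq 3$ by a single counting estimate, exactly as the structure of Section~\ref{sec; 3} invites. So suppose, for contradiction, that $\mathscr{C}\subseteq\mathrm{H}_n(q^2)$ is a non-trivial perfect code with minimum distance $d$ and $t=\lfloor(d-1)/2\rfloor$. By Lemmas~\ref{lem:t=1} and~\ref{lem:t=2} we may assume $t\geq 3$. Perfection gives $M B_t=q^{n^2}$, while the Singleton-like bound (Theorem~\ref{thm:dim}) gives $M\leq q^{n(n-d+1)}$, and since $d\geq 2t+1$ we get $M\leq q^{n(n-2t)}=q^{n^2-2nt}$. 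Combining, $B_t\geq q^{2nt}$.

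Now I would square this against the upper bound on $B_t$ from Proposition~\ref{prop: Boundsize}, namely $B_t\leq q^{t(2n-t+1)+3}$. This forces
\[
q^{2nt}\leq q^{t(2n-t+1)+3},
\]
i.e. $2nt\leq 2nt-t^2+t+3$, i.e. $t^2-t-3\leq 0$. But $t\geq 3$ gives $t^2-t-3\geq 9-3-3=3>0$, a contradiction. Hence no non-trivial perfect code exists for $t\geq 3$, and together with Lemmas~\ref{lem:t=1} and~\ref{lem:t=2} this completes the proof.

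I should double-check one boundary subtlety: the inequality $d\geq 2t+1$ used above is immediate from $t=\lfloor(d-1)/2\rfloor$, and the Singleton bound applies only when $\mathscr{C}$ is non-trivial and $d\leq n$ (if $d>n$ then $\mathscr{C}$ has a single codeword and is trivially perfect only in degenerate cases, which we exclude by hypothesis). The genuinely delicate point—and the only place any care is needed—is making sure the crude ball estimate in Proposition~\ref{prop: Boundsize} is tight enough; since the exponent gap between $q^{2nt}$ (lower) and $q^{t(2n-t+1)+3}$ (upper) is $t^2-t-3$, which is already positive at $t=3$, the argument closes with room to spare, and no sharper estimate is needed. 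The main obstacle is therefore not analytic but bookkeeping: confirming that the two small-$t$ lemmas plus this one inequality genuinely exhaust all cases $t\geq 1$, and that $t=0$ corresponds only to the trivial code $\mathscr{C}=\mathrm{H}_n(q^2)$.
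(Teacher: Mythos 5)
Your proposal is correct and follows essentially the same route as the paper: it combines the Singleton-like bound with the upper bound $B_t\leq q^{t(2n-t+1)+3}$ from Proposition~\ref{prop: Boundsize} to rule out $t\geq 3$, and defers $t=1,2$ to Lemmas~\ref{lem:t=1} and~\ref{lem:t=2}. The only (harmless) difference is that you use $d\geq 2t+1$ uniformly rather than splitting into the cases $d=2t+1$ and $d=2t+2$ as the paper does, which changes nothing since the small-$t$ lemmas cover both parities.
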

\begin{proof}
   Let $\mathscr{C}$ be a Hermitian rank-metric code with minimum distance $d$ then from Theorem \ref{thm:dim}
    \[{M\leq q^{n(n-d+1)}}.\] Suppose that $\mathscr{C}$ is perfect we have \[q^{n(n-d+1)}B_t\geq MB_t= q^{n^2},\] with $t=\lfloor \frac{d-1}{2}\rfloor$.
        Since, by Proposition \ref{prop: Boundsize} $B_t\leq q^{t(2n-t+1)+3}$, then 
        \begin{equation}\label{Thm:eq&}
           q^{n(n-d+1)+t(2n-t+1)+3}\geq q^{n^2},  
        \end{equation}
        that is 
if $d=2t+1$, this implies that
\[-t^2+t+3\geq 0,\]
which hold for $t\in\{0,1,2\}$.
If $d=2t+2$  then $\eqref{Thm:eq&}$ implies that 
\[-t^2+t+3\geq n.\]
Since $n\geq 2$, so we have that $t\in\{0,1\}$. If $t=0$, then $d\in\{0,1\}$ and $\mathscr{C}$ is perfect if and only if $\mathscr{C}= \mathrm{H}_n(q^2)$, which is a trivial case. The remaining part follows by Lemmas \ref{lem:t=1} and \ref{lem:t=2}.
\end{proof}
 
\section{{Packing density}}\label{sec : 4}

In the previous section, we proved that non-trivial perfect codes do not exist in the context of Hermitian rank-metric codes. {Although the non-existence of perfect codes raises the question of how efficiently we can cover the ambient space. This lead us to study the covering properties of Hermitian rank-metric codes, which  ensure that the ambient space is fully covered by the balls of a fixed radius and centered in codewords. To measure this, we consider the concept of packing density}, which measures how much of the ambient space is covered by the spheres centered in the codewords and of radius $\lfloor \frac{ d(\mathscr{C})-1}{2}\rfloor$.
\begin{definition}
    Let $\mathscr{C}$ be a Hermitian rank-metric code in $\mathrm{H}_n(q^2)$ of size $M$. The \textbf{{packing density}} of $\mathscr{C}$ is defined as 
\[\mathscr{D(C)}=\frac{MB_{t}}{q^{n^2}},\] where $t=\lfloor \frac{d(\mathscr{C})-1}{2}\rfloor$.
\end{definition}
By using the bound of Proposition \ref{prop: Boundsize}, we have the following bounds for the {packing density}.
\begin{proposition}\label{prop:4.2}
Let $\mathscr{C}\subseteq\mathrm{H}_n(q^2)$ be a Hermitian MRD code with minimum distance $d$ and let $t=\lfloor \frac{d-1}{2}\rfloor$. 
The {packing density} $\mathscr{D(C)}$ of $\mathscr{C}$ is upper bounded as follows
\[\mathscr{D(C)}\leq\begin{cases}
q^{-t^2+t+3} &\text{if  }  d=2t+1,\\
q^{-n-t^2+t+3} &\text{if  }  d=2t+2,  
\end{cases} \]
and lower bounded as follows
\[\mathscr{D(C)}\geq\begin{cases}
   q^{-t^2-t} &\text{if  }  d=2t+1,\\
   q^{-n-t^2-t} &\text{if  }  d=2t+2.
\end{cases}\]
\end{proposition}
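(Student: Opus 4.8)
The plan is to reduce everything to the exact value of $M$ forced by the MRD hypothesis, and then feed the ball estimates of Proposition \ref{prop: Boundsize} into the definition of $\mathscr{D(C)}$. First I would record that, since $\mathscr{C}$ is Hermitian MRD with minimum distance $d$, Theorem \ref{thm:dim} holds with equality, so $\dim_{\mathbb{F}_q}(\mathscr{C})=n(n-d+1)$ and hence $M=q^{n(n-d+1)}$. Substituting this into $\mathscr{D(C)}=MB_t/q^{n^2}$ gives
\[ \mathscr{D(C)} = q^{n(n-d+1)-n^2}\,B_t = q^{-n(d-1)}\,B_t, \]
so the whole problem becomes bounding $q^{-n(d-1)}B_t$ from above and below.

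Next I would invoke Proposition \ref{prop: Boundsize}, namely $q^{t(2n-t-1)}\le B_t\le q^{t(2n-t+1)+3}$, and split according to the parity of $d$ — equivalently, whether $d=2t+1$ or $d=2t+2$, which is exactly the dichotomy produced by $t=\lfloor(d-1)/2\rfloor$. In the odd case $d=2t+1$ one has $n(d-1)=2nt$, so that the upper bound becomes $q^{-2nt+t(2n-t+1)+3}=q^{-t^2+t+3}$ and the lower bound becomes $q^{-2nt+t(2n-t-1)}=q^{-t^2-t}$. In the even case $d=2t+2$ one has $n(d-1)=n(2t+1)=2nt+n$, and the same substitution carries an extra factor $q^{-n}$ through both estimates, yielding $q^{-n-t^2+t+3}$ and $q^{-n-t^2-t}$ respectively. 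Collecting the four cases gives precisely the stated bounds.

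Since the argument is a substitution followed by the exponent bookkeeping above, I do not expect a serious obstacle; the only point requiring care is keeping the exponents straight when $n(d-1)$ is expanded in terms of $t$ in the two parity cases, and making sure the additive slack coming from the $q$-binomial estimate (the $+2$ in Lemma \ref{bin}, propagated to $+3$ for $B_t$ in Proposition \ref{prop: Boundsize}) is attributed to the upper bound only. One could optionally remark that the gap between the upper and lower exponents is the constant $2t+3$, independent of $n$, which quantifies the tightness of these packing-density estimates.
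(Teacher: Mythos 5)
Your proposal is correct and follows essentially the same route as the paper's own proof: substitute the MRD size $M=q^{n(n-d+1)}$ into the definition of $\mathscr{D(C)}$ and apply the two-sided ball estimate $q^{t(2n-t-1)}\leq B_t\leq q^{t(2n-t+1)+3}$ from Proposition \ref{prop: Boundsize}, splitting into the cases $d=2t+1$ and $d=2t+2$. The exponent bookkeeping in both parity cases checks out, so nothing further is needed.
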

\begin{proof}
   { Let us consider the case $d=2t+1$. From Theorem \ref{thm:dim} $|\mathscr{C}|= q^{n(n-d+1)}$. By Proposition \ref{prop: Boundsize}, $B_t\leq q^{t(2n-t+1)+3}$, and so \[
\mathscr{D(C)} \leq \frac{q^{n (n - 2t)} \cdot q^{t(2n - t + 1) + 3}}{q^{n^2}} = q^{-t^2 + t + 3},
\]
and by Proposition \ref{prop: Boundsize}, $B_t \geq q^{t(2n-t-1)}$, we have \[
\mathscr{D(C)} \leq \frac{q^{n (n - 2t)} \cdot q^{t(2n - t -1)}}{q^{n^2}} = q^{-t^2 -t}.
\]
In a similar way, we can derive the bounds for $d=2t+2$.}
\end{proof}
It is important to note that when the minimum distance is odd, then the upper and lower bounds of {packing density} depend entirely on the error correction capability of the code. If $\mathscr{C}$ is a Hermitian rank-metric code in $\mathrm{H}_{n}(q^2)$ then we can only get on upper bound of {packing density}.
 
\begin{proposition}\label{prop:cvd}
    Let $\mathscr{C}\subseteq \mathrm{H}_{n}(q^2)$ be a Hermitian rank-metric code with minimum distance $d$ and let $t=\lfloor \frac{d-1}{2}\rfloor$. An upper bound of {packing density} $\mathscr{D}(\mathscr{C})$ of $\mathscr{C}$ for $d\in\{2,3,4,5,6\}$ is
    \[\mathscr{D(C)}\leq \begin{cases}
        \frac{1}{q^n} &\text{ if } d=2,\\
        \\
        \frac{q^{-2n+1}+1}{q+1}&\text{ if } d=3,\\
        \\
       \frac{q^{-3n+1}+q^{-n}}{q+1}  &\text{ if } d=4,\\
       \\
        q^{-4n}\left(1+\frac{q^{2n}-1}{q+1}+\frac{q(q^{2n}-1)(q^{2n-2}-1)}{(q+1)(q^2-1)}\right) &\text{ if } d=5,\\
        \\
         q^{-5n} \left(1+\frac{q^{2n}-1}{q+1}+\frac{q(q^{2n}-1)(q^{2n-2}-1)}{(q+1)(q^2-1)}\right) &\text{ if } d=6.
    \end{cases}\]
\end{proposition}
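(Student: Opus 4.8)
The plan is to treat each value of $d \in \{2,3,4,5,6\}$ separately, in each case using the Singleton-like bound (Theorem \ref{thm:dim}) to bound $M = |\mathscr{C}|$ from above, and then substituting the exact value of $B_t$ obtained from Theorem \ref{lem:rhm} into the definition $\mathscr{D(C)} = MB_t/q^{n^2}$. Note that $t = \lfloor (d-1)/2 \rfloor$ takes the value $0$ when $d=2$, the value $1$ when $d \in \{3,4\}$, and the value $2$ when $d \in \{5,6\}$, so only three distinct ball sizes $B_0, B_1, B_2$ actually occur.

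First I would record the exact ball sizes. We have $B_0 = S_0 = 1$. For $B_1 = S_0 + S_1$, Theorem \ref{lem:rhm} gives $S_1 = bin_{q^2}(n,1)(q+(-1)) = \frac{q^{2n}-1}{q^2-1}(q-1) = \frac{q^{2n}-1}{q+1}$, so $B_1 = 1 + \frac{q^{2n}-1}{q+1}$, exactly as used in the proof of Lemma \ref{lem:t=1}. For $B_2 = S_0 + S_1 + S_2$, Theorem \ref{lem:rhm} gives $S_2 = bin_{q^2}(n,2)\,q\,(q-1)(q^2+1) = \frac{(q^{2n}-1)(q^{2n-2}-1)}{(q^2-1)(q^4-1)}\,q\,(q-1)(q^2+1) = \frac{q(q^{2n}-1)(q^{2n-2}-1)}{(q+1)(q^2-1)}$, which is precisely the third summand appearing in the claimed bounds for $d=5,6$ (and matches the expression in Lemma \ref{lem:t=2}).

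Next I would apply the Singleton-like bound $M \le q^{n(n-d+1)}$ case by case and simplify the exponent $n(n-d+1) - n^2 = -n(d-1)$. For $d=2$: $M \le q^{-n}\cdot q^{n^2}$ and $B_0 = 1$, giving $\mathscr{D(C)} \le q^{-n}$. For $d=3$: $M \le q^{n(n-2)}$, so $\mathscr{D(C)} \le q^{-2n}B_1 = q^{-2n}\left(1 + \frac{q^{2n}-1}{q+1}\right) = \frac{q^{-2n+1} + 1 - q^{-2n}}{q+1} \le \frac{q^{-2n+1}+1}{q+1}$ after dropping the harmless $-q^{-2n}$ term in the numerator (which is the standard slight loosening one makes to get a clean statement). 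For $d=4$: $M \le q^{n(n-3)}$, giving $\mathscr{D(C)} \le q^{-3n}B_1 = \frac{q^{-3n+1} + q^{-n} - q^{-3n}}{q+1} \le \frac{q^{-3n+1}+q^{-n}}{q+1}$. For $d=5$: $M \le q^{n(n-4)}$, so $\mathscr{D(C)} \le q^{-4n}B_2$, which is the stated expression verbatim. For $d=6$: $M \le q^{n(n-5)}$, so $\mathscr{D(C)} \le q^{-5n}B_2$, again verbatim.

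I do not expect a genuine obstacle here — the argument is a direct substitution once the formulas for $S_1$ and $S_2$ are extracted from Theorem \ref{lem:rhm} and the Gaussian binomial coefficients $bin_{q^2}(n,1)$ and $bin_{q^2}(n,2)$ are expanded. The only point requiring a little care is the bookkeeping in the $d=3$ and $d=4$ cases, where a negative lower-order term in the numerator is discarded to present the bound in the clean form stated; one should check that this only weakens the inequality, which it does since $q^{-2n}, q^{-3n} > 0$. Everything else is routine algebra, so the write-up should be short.
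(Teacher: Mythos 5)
Your proposal is correct and is the evident intended argument (the paper in fact states this proposition without proof): bound $M\leq q^{n(n-d+1)}$ by the Singleton-like bound and substitute the exact values of $B_0$, $B_1$, $B_2$ computed from Theorem \ref{lem:rhm}. One small arithmetic note: in the $d=3$ and $d=4$ cases there is no leftover negative term to discard, since
\[
q^{-2n}B_1=q^{-2n}\cdot\frac{q^{2n}+q}{q+1}=\frac{1+q^{-2n+1}}{q+1},
\]
so the stated bounds are attained with equality for MRD codes; your spurious $-q^{-2n}$ (resp.\ $-q^{-3n}$) in the numerator is a slip, but dropping it only loosens the inequality, so the conclusion stands.
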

 We note that when $n$ approaches to infinity, the {packing density} goes to zero when  $d$ is even. While, if we consider a family $\{\mathscr{C}_i\}_i\subseteq\mathrm{H}_{n_i}(q^2)$ of Hermitian MRD codes with minimum distance $d=3$, then the sequence of the packing densities goes to $\frac{1}{q+1}$.
 
 \begin{proposition}
     Let $\{\mathscr{C}_i\}_i$ be a family of Hermitian MRD code in $\mathrm{H}_{n_i}(q^2)$, and $\mathscr{C}_i$ has parameters $(n_i,|\mathscr{C}_i|,d)$. Then
     \begin{itemize}
         \item if $d$ is even, then $\lim_{i\to\infty} \mathscr{D}(\mathscr{C}_i)=0$,
         \item if $d$ is odd and $\lim_{i\to\infty} \mathscr{D}(\mathscr{C}_i)$ exists, then 
         \[ q^{-t^2-t}\leq\lim_{i\to\infty} \mathscr{D}(\mathscr{C}_i)\leq q^{-t^2+t+3},\]
         where $t=\lfloor\frac{d-1}{2}\rfloor$.
     \end{itemize}
 \end{proposition}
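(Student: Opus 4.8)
The plan is to obtain both parts as an immediate consequence of Proposition \ref{prop:4.2}, exploiting the fact that its bounds are uniform in $n$ when the minimum distance is odd, and decay like a fixed negative power of $q^{n}$ when it is even.

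First I would record the set-up. Since every $\mathscr{C}_i$ has the same minimum distance $d$, the integer $t=\lfloor\frac{d-1}{2}\rfloor$ does not depend on $i$; and since each $\mathscr{C}_i$ is a Hermitian MRD code, Theorem \ref{thm:dim} gives $|\mathscr{C}_i|=q^{n_i(n_i-d+1)}$ exactly (not merely an inequality), which is precisely the hypothesis under which Proposition \ref{prop:4.2} is proved. Hence Proposition \ref{prop:4.2} applies to each $\mathscr{C}_i$ with $n$ replaced by $n_i$.

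For the even case $d=2t+2$, Proposition \ref{prop:4.2} yields
\[0\le \mathscr{D}(\mathscr{C}_i)\le q^{-n_i-t^2+t+3}.\]
Because $t$ is fixed while $n_i\to\infty$, the upper bound tends to $0$, so by the squeeze theorem $\lim_{i\to\infty}\mathscr{D}(\mathscr{C}_i)=0$; in particular the limit is automatically seen to exist, so no extra hypothesis is needed in this case. For the odd case $d=2t+1$, Proposition \ref{prop:4.2} yields
\[q^{-t^2-t}\le \mathscr{D}(\mathscr{C}_i)\le q^{-t^2+t+3}\]
for every $i$, with both bounds independent of $i$. Consequently, whenever $\lim_{i\to\infty}\mathscr{D}(\mathscr{C}_i)$ exists it must lie in the closed interval $[\,q^{-t^2-t},\,q^{-t^2+t+3}\,]$, which is exactly the assertion.

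I do not expect any genuine obstacle here: the only points needing a word of care are that $d$, and hence $t$, is constant along the family, so that the estimates of Proposition \ref{prop:4.2} are uniform in $i$, and that the Hermitian MRD hypothesis is what upgrades the Singleton-like inequality on $|\mathscr{C}_i|$ to an equality, which is what produces the lower bound on $\mathscr{D}(\mathscr{C}_i)$ — without it one would only recover the upper estimates. One may also remark, consistently with the discussion preceding the statement, that for $d=3$ one has $t=1$ and the interval $[\,q^{-2},\,q^{3}\,]$ indeed contains the limiting value $\frac{1}{q+1}$ attained by a family of Hermitian MRD codes of minimum distance $3$.
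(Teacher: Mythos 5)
Your proposal is correct and follows essentially the same route as the paper: both deduce the result directly from Proposition \ref{prop:4.2}, using that for $d=2t+2$ the upper bound $q^{-n_i-t^2+t+3}$ tends to $0$ as $n_i\to\infty$, while for $d=2t+1$ the two bounds are independent of $n_i$ and therefore trap any existing limit. Your write-up is in fact somewhat more explicit than the paper's (which merely appeals to Propositions \ref{prop:4.2} and \ref{prop:cvd} and the independence of the bounds from $n$), and your observations that the MRD hypothesis is what yields the lower bound and that the limit exists automatically in the even case are correct.
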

 \begin{proof}
     It is easy to see that when minimum distance is even in Proposition \ref{prop:4.2} and Proposition \ref{prop:cvd}, as $i$ tends to infinity the {packing density} $\mathscr{D}(\mathscr{C}_i)$ approaches zero. For the case, if $d$ is odd and $\lim_{i\to\infty} \mathscr{D}(\mathscr{C}_i)$ exist, Proposition \ref{prop:4.2} and Proposition \ref{prop:cvd} clearly show that the {packing density} is independent on the order of the matrices of the code. Therefore $\lim_{i\to\infty} \mathscr{D}(\mathscr{C}_i)$ is bounded as in the statement.
 \end{proof}
 \section{Conclusion}\label{sec : 5}
This proposed work aims to study perfect codes in the context of Hermitian rank-metric codes. This is similar to what happens for the classical rank-metric codes \cite{loidreau2006properties} and for the case of alternating matrices for large values of the field size \cite{abiad2024eigenvalue}. Following the approaches in \cite{loidreau2006properties} and \cite{mushrraf2024perfect}, we proved in Theorem \ref{main; thm} that no non-trivial perfect codes exist in the Hermitian rank-metric code case. We also investigated their packing properties and found that spheres centered at the codewords covered $\frac{1}{q+1}$ of the ambient space, with a minimum distance of three.
\bibliographystyle{abbrv}
%\bibliography{Biblo}

\section*{Acknowledgements}
I would like to express my deepest gratitude to Ferdinando Zullo for his invaluable, guidance and support throughout the development of this paper.

Usman Mushrraf \\
Dipartimento di Matematica e Fisica,\\
Universit\`a degli Studi della Campania ``Luigi Vanvitelli'',\\
I--\,81100 Caserta, Italy\\
{{\em usman.mushrraf@unicampania.it}}
\end{document}